\documentclass[a4paper,12pt]{article}

\usepackage{amsthm,amsmath,amssymb} 
\usepackage{url} 
\usepackage{cases} 
\usepackage{tikz}
\usepackage{tikz-cd}
\usepackage{float}
\usepackage{xcolor}
\usepackage{algorithm}
\usepackage{algpseudocode}

\theoremstyle{plain} 
\newtheorem{lemma}{Lemma}[section] 
\newtheorem{theorem}[lemma]{Theorem}

\theoremstyle{definition}
\newtheorem{definition}{Definition}[section] 
\newtheorem{remark}{Remark}
\newtheorem{example}{Example}

\title{On the equivalence between additive and linear codes}
\author{Kanat Abdukhalikov \\	
Department of Mathematical Sciences, \\
UAE University, PO Box 15551, Al Ain, UAE\\
Email: abdukhalik@uaeu.ac.ae \bigskip  \\  
Duy Ho \\
Department of Mathematical Sciences, \\
UAE University, PO Box 15551, Al Ain, UAE\\
Email: duyho@uaeu.ac.ae
}
\date{ }

\begin{document} 

\maketitle

\begin{abstract} 
Additive codes have attracted considerable attention for their potential to outperform linear codes. 
However, distinguishing strictly additive codes from those that are equivalent to linear codes remains a fundamental challenge. 
To resolve this ambiguity, we introduce a deterministic test that requires only the generator matrix of the code.
We apply this test to verify the strict additivity of several quaternary additive codes recently reported in the literature. 
Conversely, we demonstrate that a previously known additive complementary dual (ACD) code is equivalent to a linear Hermitian LCD code, thereby improving the best-known bounds for such linear codes.

\end{abstract}

\textbf{Keywords}: Additive codes, linear complementary dual (LCD) codes, additive complementary dual (ACD) codes, quaternary codes, quasi-cyclic codes.

\textbf{Mathematics subject classification}: 94B05, 94B15, 94B60.

\section{Introduction}

Additive codes have long been a subject of significant research interest, primarily due to their  application in the construction of binary quantum codes, first shown in \cite{calderbank1998}. Additive codes have recently attracted renewed attention because they can provide examples of codes that outperform their linear counterparts, as demonstrated in  \cite{bierbrauer2021, guan2023, kurz2024}.

The relationship between additive and linear codes remains an active area of research. In \cite{bierbrauer2021}, it was established that every optimal quaternary additive code of dimension $k \leq 3$ shares the same parameters as an optimal linear quaternary code. However, recent developments have demonstrated that this equivalence does not hold for higher dimensions. Notably, \cite{ball2026} and \cite{kurz2024} constructed new additive codes with minimum distances exceeding the Griesmer bound for the corresponding linear codes,   proving that they are strictly additive and not equivalent to any linear code.

A significant contribution to this field is the work of Guan, Li, Liu, and  Ma \cite{guan2023},  who constructed numerous quaternary additive codes based on quasi-cyclic codes possessing higher minimum distances than the best-known linear codes of the same length and dimension.
This development raises a fundamental question: when a new additive code is constructed, how can one  determine whether it is  equivalent to an existing linear code with the same parameters? In \cite{ball2026}, Ball and Popatia noted that it remains unclear whether linear codes exist with the same parameters as those presented in \cite{guan2023}.

In this paper,  we provide an answer to this question by formulating a deterministic   test to determine whether an additive code is equivalent to a linear code. 
Our approach is purely algebraic and requires only the generator matrix of the code.
Using this framework, we computationally verify that the additive codes constructed in \cite{guan2023} and \cite{zhu2024}
are not equivalent to any linear codes.  
Furthermore, in  \cite{carlet2018}, it was established that for $q > 4$, every linear code is equivalent to a Hermitian LCD code. 
Motivated by this result, we investigate whether quaternary additive complementary dual (ACD) codes constructed in \cite{guan2023} can be equivalent to linear Hermitian LCD codes. 
We provide one such example where a quaternary ACD code is shown to be equivalent to a linear Hermitian LCD code.

To the best of our knowledge, existing methods for distinguishing strictly additive codes from those that are equivalent to linear codes rely primarily on geometric techniques, as seen in \cite{ball2023, ball2023b,  ball2025, ball2026, blokhuis2004}. 
In the case of MDS codes, Adriaensen and Ball \cite{ball2023} established sufficient conditions under which an additive MDS code must be equivalent to a linear code, using geometric techniques based on pseudo-arcs and Desarguesian spreads.
In contrast, we provide an algebraic method based on the generator matrix that applies to any additive code.

The content of the paper is organized as follows. 
In Section 2, we recall preliminary results and definitions. 
In Section 3, we develop the algebraic framework for the equivalence test, including characterizations of when an additive code is equivalent to a linear code, and present a deterministic algorithm based on the generator matrix of the code.
In Section 4, we demonstrate the applicability of this test to known  additive codes.  
 
All computations in this paper have been done with the computer algebra system MAGMA V2.29-7 \cite{magma}. The MAGMA scripts used for the computations are available from the authors upon request.

\section{Preliminaries}
Let $\mathbb{F}_q$ be a finite field of $q$ elements, and let $\mathbb{F}_{q^2}^* = \langle \omega \rangle$, where $\omega$ is a primitive element of $\mathbb{F}_{q^2}$. We denote by $\mathbb{F}_{q}^{n \times m}$ the $\mathbb{F}_q$-vector space of matrices of size $n \times m$ with entries in $\mathbb{F}_q$. While the algebraic framework introduced in this paper can be naturally generalized to additive codes over $\mathbb{F}_{q^h}$ for any $h \ge 2$, we restrict our attention to the case $h=2$ to directly address the significant interest in quaternary codes found in the recent literature.

\subsection{Additive codes}
We fix $\{1, \omega\}$ as a basis of $\mathbb{F}_{q^2}$ over $\mathbb{F}_q$. Hence, any element $z_i \in \mathbb{F}_{q^2}$ can be uniquely written as $z_i = x_i + \omega y_i$ for $x_i, y_i \in \mathbb{F}_q$. For any vector $z = (z_1, z_2, \dots, z_n) \in \mathbb{F}_{q^2}^n$, we define $\phi : \mathbb{F}_{q^2}^n \to \mathbb{F}_q^{2n}$ as
\begin{equation*}
\phi(z) = (x_1, y_1 \mid x_2, y_2 \mid \dots \mid x_n, y_n),
\end{equation*}
where the vertical bars separate the $n$ coordinate pairs. 

We follow the notation of additive codes from \cite{ball2026} and \cite{kurz2024}. An $[n, k/2]_q^2$ additive code $C$ is an $\mathbb{F}_q$-linear subgroup of $\mathbb{F}_{q^2}^n$ of size $q^k$. Its image $\phi(C)$ is a $k$-dimensional $\mathbb{F}_q$-linear subspace of $\mathbb{F}_q^{2n}$. 
The subspace $\phi(C)$ can be described in terms of a \textit{generator matrix} $G \in \mathbb{F}_q^{k \times 2n}$ of rank $k$, whose rows form an $\mathbb{F}_q$-basis for $\phi(C)$.
We then define the generator matrix of the additive code $C$ to be this same matrix $G$. 

The Hamming distance $d_H(u, v)$ between two vectors $u, v \in \mathbb{F}_{q^2}^n$ is defined as the number of coordinates in which they differ. The Hamming weight $\text{wt}_H(z)$ of a vector $z = (z_1, \dots, z_n) \in \mathbb{F}_{q^2}^n$ is defined as the number of non-zero coordinates, i.e., $\text{wt}_H(z) = |\{ 1 \le i \le n \mid z_i \neq 0 \}|$. Because an additive code is closed under  addition, its minimum Hamming distance is equal to the minimum Hamming weight among all its non-zero codewords. An $[n, k/2]_q^2$ additive code with a minimum Hamming distance $d$ is denoted as an $[n, k/2, d]_q^2$ code.  Thus, an $[n, k, d]_q^1$ code is a standard linear code. Under the map $\phi$, the condition $z_i \neq 0$ holds if and only if the corresponding pair $(x_i, y_i) \neq (0,0)$ in $\mathbb{F}_q^{2n}$.

\subsection{Equivalence of additive codes}

We first define the standard notion of equivalence for additive codes over $\mathbb{F}_{q^2}$, see for example in \cite[Section 9.3]{bierbrauerconcise}. 

\begin{definition}[Monomial Equivalence] \label{monequiv}
Let $C, D \subseteq \mathbb{F}_{q^2}^n$ be two additive codes.
We say that $C$ and $D$ are \textit{monomially equivalent} if there exist a permutation $\pi \in S_n$ and elements $A_i \in \mathrm{GL}_2(\mathbb{F}_q)$ such that
\[
(z_1, z_2, \dots, z_n) \in C \implies (z_{\pi(1)} A_1, z_{\pi(2)} A_2, \dots, z_{\pi(n)} A_n) \in D
\]
for all $z = (z_1, z_2, \dots, z_n) \in C$. Here, we view each $A_i$ as an $\mathbb{F}_q$-linear automorphism of $\mathbb{F}_{q^2}$ acting on the right, where the elements of $\mathbb{F}_{q^2}$ are represented as row vectors with respect to the basis $\{1, \omega\}$. 

In terms of generator matrices, $C$ and $D$ are monomially equivalent if there exist $\mathbb{F}_q$-generator matrices $G$ and $G'$ of their respective images $\phi(C)$ and $\phi(D)$, an $n \times n$ permutation matrix $P$, and a block-diagonal matrix $A = \mathrm{diag}(A_1, A_2, \dots, A_n)$ with $A_i \in \mathrm{GL}_2(\mathbb{F}_q)$ for $1 \le i \le n$, such that
\begin{equation}
G' = G A (P \otimes I_2).
\end{equation}
\end{definition}

Here  and throughout the paper, codewords are row vectors and all matrices act on the right. The rows of $G$ span $\phi(C)$, and right multiplication by $A(P \otimes I_2)$ acts on the $2n$ coordinates.

When we say two codes are equivalent, we mean that they are monomially equivalent. We are interested in  determining whether a given additive code is equivalent to a linear code over $\mathbb{F}_{q^2}$. We say that an additive code is \textit{strictly additive} if it is not equivalent to any linear code over $\mathbb{F}_{q^2}$.

\begin{definition}[$\mathrm{SL}$-Equivalence] \label{slequiv}
Two additive codes $C, D \subseteq \mathbb{F}_{q^2}^n$ are \textit{$\mathrm{SL}$-equivalent} if they are monomially equivalent and the matrices $A_i$ in Definition \ref{monequiv} belong to $\mathrm{SL}_2(\mathbb{F}_q)$ instead of $\mathrm{GL}_2(\mathbb{F}_q)$.
\end{definition}

\begin{remark} \label{remarkeven}
By \cite[Lemma 13.3.8]{mullen2013}, any matrix $A_i \in \mathrm{GL}_2(\mathbb{F}_q)$ can   be factored as $A_i = B_i D_i$, where $B_i \in \mathrm{SL}_2(\mathbb{F}_q)$ and $D_i = \mathrm{diag}(1, \det(A_i))$ is a diagonal matrix. Applying the $\mathbb{F}_q$-linear automorphism corresponding to right-multiplication by $D_i$ scales only one component of the row vector representation, and so it does not preserve $\mathbb{F}_{q^2}$-linearity. Consequently, for odd $q$, one must consider the full monomial equivalence.

However, when $q$ is even, we can obtain a more convenient factorization. Because every element in a field of even characteristic is a perfect square, there always exists a unique $c_i \in \mathbb{F}_q^*$ such that $c_i^2 = \det(A_i)$. 
This allows us to factor $A_i$ instead as $A_i = B_i D_i$, where $B_i \in \mathrm{SL}_2(\mathbb{F}_q)$ and $D_i = c_i I_2$ is a scalar matrix. In this case, right-multiplication by $D_i$ acts on the $i$-th coordinate as scalar multiplication by $c_i \in \mathbb{F}_q^* \subset \mathbb{F}_{q^2}^*$. Applying such coordinatewise scalar multiplications to an $\mathbb{F}_{q^2}$-linear code again gives an $\mathbb{F}_{q^2}$-linear code.  Thus, when $q$ is even, it is sufficient to check only for $\mathrm{SL}$-equivalence.
Conversely, applying a monomial equivalence transformation to an $\mathbb{F}_{q^2}$-linear code can result in an additive code that is not $\mathbb{F}_{q^2}$-linear (for example, see \cite[Exercise 558]{huffman2010}).
\end{remark}

\subsection{Inner products and ACD codes}

For the applications to ACD codes in Section 4, we recall some material on inner products and hulls of additive codes. 
For $\mathbf{x} = (x_1, \dots, x_n)$ and $\mathbf{y} = (y_1, \dots, y_n)$ in $\mathbb{F}_{q^2}^n$, the \textit{alternating form} is defined by
\begin{equation*}
\langle \mathbf{x}, \mathbf{y} \rangle_a = \sum_{i=1}^n \frac{x_i y_i^q - x_i^q y_i}{\omega^q - \omega}.
\end{equation*}
For $\mathbf{u} = (a_1, b_1 \mid \dots \mid a_n, b_n)$ and $\mathbf{v} = (c_1, d_1 \mid \dots \mid c_n, d_n)$ in $\mathbb{F}_q^{2n}$, the \textit{symplectic inner product} is defined as
\begin{equation*}
\langle \mathbf{u}, \mathbf{v} \rangle_s = \sum_{i=1}^n (a_i d_i - b_i c_i).
\end{equation*}
It can be checked that $\langle \mathbf{x}, \mathbf{y} \rangle_a = \langle \phi(\mathbf{x}), \phi(\mathbf{y}) \rangle_s$ for any $\mathbf{x}, \mathbf{y} \in \mathbb{F}_{q^2}^n$. 
The \textit{dual} of an additive code $C \subseteq \mathbb{F}_{q^2}^n$ with respect to the alternating form is defined as
\begin{equation*}
C^{\perp_a} = \{ \mathbf{y} \in \mathbb{F}_{q^2}^n \mid \langle \mathbf{x}, \mathbf{y} \rangle_a = 0 \text{ for all } \mathbf{x} \in C \}.
\end{equation*}
The \textit{hull} of $C$ is defined as $\mathrm{Hull}_a(C) = C \cap C^{\perp_a}$. An additive code $C$ is called an \textit{additive complementary dual} (ACD) code if its hull is trivial, i.e., $\mathrm{Hull}_a(C) = \{\mathbf{0}\}$. Because $\phi$ preserves the inner product, an additive code $C \subseteq \mathbb{F}_{q^2}^n$ is an ACD code if and only if its image $\phi(C)$ is a symplectic linear complementary dual (LCD) code (that is, $\phi(C) \cap \phi(C)^{\perp_s} = \{\mathbf{0}\}$).

\begin{remark} \label{remarkSLhull}
Our consideration of \textrm{SL}-equivalence is motivated by the characterization of symplectic isometries over local Frobenius rings by Gluesing-Luerssen and Pllaha \cite[Theorem 7.1]{gluesing2019}. As noted in \cite[Section 7]{gluesing2019}, symplectic isometries preserve both the symplectic inner product and the symplectic distance. Since finite fields are local Frobenius rings, two additive codes are \textrm{SL}-equivalent if and only if they are related by a symplectic isometry.
Furthermore, since \textrm{SL}-equivalence   preserves the symplectic inner product, any two \textrm{SL}-equivalent codes have the same hull dimension.  
\end{remark}
 
\section{Equivalence test for linearity of codes}

Throughout this section, let $C \subseteq \mathbb{F}_{q^2}^n$ be an $[n, k/2]_q^2$ additive code. 
We recall that $k$ denotes the $\mathbb{F}_q$-dimension of $C$, so that $C$ has size $q^k$ and its generator matrix is $G \in \mathbb{F}_q^{k \times 2n}$.
The parameter $k/2$ in the notation $[n, k/2]_q^2$ need not be an integer. 
An $\mathbb{F}_{q^2}$-linear code of $\mathbb{F}_{q^2}$-dimension $m$ has $\mathbb{F}_q$-dimension $2m$. Since equivalence preserves the size of a code, an additive code with odd $k$ is never equivalent to a linear code. 
We therefore assume from now on that $k$ is even.

Let the minimal polynomial of $\omega$ over $\mathbb{F}_q$ be $f(x)=x^2+c_1x+c_0$. 
Following \cite[Section 13.2]{mullen2013}, we define the $2 \times 2$ companion matrix of $\omega$ over $\mathbb{F}_q$ as
\[ \mathbf{M}_\omega = \begin{pmatrix} 0 & 1 \\ -c_0 & -c_1 \end{pmatrix}. \]
Since $\omega$ generates the multiplicative group $\mathbb{F}_{q^2}^*$, the matrix $\mathbf{M}_\omega$ generates a cyclic subgroup of $\textrm{GL}_2(\mathbb{F}_q)$ of order $q^2-1$, commonly known as a \textit{Singer group} (or Singer cycle). The mapping $\Phi: \mathbb{F}_{q^2} \to \mathbb{F}_q^{2 \times 2}$ uniquely determined by $\Phi(\omega) = \mathbf{M}_\omega$ defines a ring homomorphism, known as the regular representation of the field. 

Under this representation, multiplying by any non-zero element $\alpha \in \mathbb{F}_{q^2}^*$ corresponds to matrix multiplication within the Singer group. Since every non-zero element can be written as $\alpha = \omega^i$ for some integer $i$, its corresponding matrix representation is $\mathbf{M}_\alpha = \mathbf{M}_\omega^i$. Alternatively, writing $\alpha$ in the basis $\{1, \omega\}$ as $\alpha = a + b\omega$, its matrix representation takes the linear form $\mathbf{M}_\alpha = aI_2 + b\mathbf{M}_\omega$.

To extend this action to $\mathbb{F}_q^{2n}$, we define the $2n \times 2n$ block-diagonal matrix $M_\omega = I_n \otimes \mathbf{M}_\omega$.

\begin{theorem} \label{additivecharacterization}
Let $C \subseteq \mathbb{F}_{q^2}^n$ be an additive code of $\mathbb{F}_q$-dimension $k$, and let $G \in \mathbb{F}_q^{k \times 2n}$ be a generator matrix of $C$. 
Then $C$ is equivalent to an $\mathbb{F}_{q^2}$-linear code if and only if there exist $J \in \textrm{GL}_k(\mathbb{F}_q)$ and a block-diagonal matrix $A = \operatorname{diag}(A_1, \dots, A_n)$ with 
$A_i \in \textrm{GL}_2(\mathbb{F}_q)$
such that
\[ J G = G (A M_\omega A^{-1}). \]
\end{theorem}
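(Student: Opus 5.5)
Plan: translate $\mathbb{F}_{q^2}$-linearity of a code into an invariance condition on the row space, then move that condition through the equivalence transformation.

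\textbf{Step 1 (linearity criterion).} An additive code $D$ with generator matrix $G'$ is $\mathbb{F}_{q^2}$-linear if and only if $\phi(D)$ is invariant under right multiplication by $M_\omega$. Indeed, $\phi(\omega z) = \phi(z) M_\omega$ coordinatewise: with row-vector conventions, multiplying $x+\omega y$ by $\omega$ gives $(x,y)\mathbf{M}_\omega$ for the companion matrix as written (using $\omega^2=-c_0-c_1\omega$). Closure under $\mathbb{F}_q$-scalars together with closure under multiplication by $\omega$ gives closure under all of $\mathbb{F}_{q^2}=\mathbb{F}_q[\omega]$. Since the rows of $G'$ form a basis of $\phi(D)$, invariance is equivalent to the existence of $J\in\mathbb{F}_q^{k\times k}$ with $JG'=G'M_\omega$. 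This $J$ is automatically invertible: $M_\omega$ is invertible, so $G'M_\omega$ has rank $k$, and hence $J$ has rank $k$.

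\textbf{Step 2 (moving through equivalence).} By Definition~\ref{monequiv}, $C$ is equivalent to a linear code exactly when some $G'=GA(P\otimes I_2)$ satisfies Step~1. (A different choice of basis for $\phi(D)$ only changes $J$ by conjugation, so it does not matter which generator matrix we pick.) Write $Q=A(P\otimes I_2)$. The condition $JGQ=GQM_\omega$ becomes
\[
JG = G\,Q M_\omega Q^{-1}.
\]
The key observation is that $P\otimes I_2$ commutes with $M_\omega=I_n\otimes\mathbf{M}_\omega$, because $(P\otimes I_2)(I_n\otimes \mathbf{M}_\omega)=P\otimes\mathbf{M}_\omega=(I_n\otimes\mathbf{M}_\omega)(P\otimes I_2)$. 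Therefore $QM_\omega Q^{-1}=AM_\omega A^{-1}$, which yields the forward direction.

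\textbf{Step 3 (converse).} Given $J$ and $A$ with $JG=GAM_\omega A^{-1}$, take $P=I$ and $G'=GA$. Then $JG'=G'M_\omega$, so the code $D$ spanned by $G'$ is $\mathbb{F}_{q^2}$-linear by Step~1, and $D$ is monomially equivalent to $C$.

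The only delicate point is Step~1: checking that the companion-matrix convention matches right multiplication of row vectors in the basis $\{1,\omega\}$, and noting that invariance under $\omega$ alone suffices.
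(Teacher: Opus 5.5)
Your proposal is correct and follows the paper's proof essentially step for step. The forward direction conjugates $M_\omega$ by $A(P\otimes I_2)$ and uses that $P\otimes I_2$ commutes with $I_n\otimes\mathbf{M}_\omega$, and the converse takes $G'=GA$. Your Step 1 makes explicit two points the paper leaves implicit: that $\phi(\omega z)=\phi(z)M_\omega$ holds with the stated companion matrix, and that closure under $\mathbb{F}_q$ and $\omega$ already gives $\mathbb{F}_{q^2}$-linearity.
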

\begin{proof} 
 Assume that $C$ is equivalent to an $\mathbb{F}_{q^2}$-linear code $C'$.  Since $G$ is a generator matrix of $C$, there exists a generator matrix $G'$ of $C'$ such that $G' = G A(P \otimes I_2)$. Let $M = A(P \otimes I_2)$, and so $G' = GM$. 
Since  $C'$ is $\mathbb{F}_{q^2}$-linear, its image $\phi(C')$ is closed under scalar multiplication by $\omega$. 
 This implies there exists $J \in \mathbb{F}_q^{k \times k}$ such that $J G' = G' M_\omega$. The corresponding linear map is automatically invertible, since multiplication by $\omega$ is invertible, so $J \in \mathrm{GL}_k(\mathbb{F}_q)$.
Substituting $G' = GM$ gives $J (GM) = (GM) M_\omega$, which implies $J G = G(M M_\omega M^{-1})$. Expanding $M$, we have
\begin{align*}
M M_\omega M^{-1} &= A(P \otimes I_2) (I_n \otimes \mathbf{M}_\omega) (P^{-1} \otimes I_2) A^{-1} \\
  &= A(P I_n P^{-1} \otimes I_2 \mathbf{M}_\omega I_2) A^{-1} \\
  &= A(I_n \otimes \mathbf{M}_\omega) A^{-1} \\
  &= A M_\omega A^{-1},
\end{align*} 
and so $J G = G (A M_\omega A^{-1}).$

Conversely, assume that there exist $J \in \textrm{GL}_k(\mathbb{F}_q)$ and $A = \operatorname{diag}(A_1, \dots, A_n)$ 
with $A_i \in \textrm{GL}_2(\mathbb{F}_q)$
such that $J G = G (A M_\omega A^{-1})$.
Let $C'$ be the code generated by $G' = GA$. 
By Definition \ref{monequiv}, the code $C'$ is equivalent  to $C$. 
Then
\[ J G' = J(GA) = (JG)A = G(AM_\omega A^{-1})A = GAM_\omega = G'M_\omega. \]
This shows that $C'$ is an $\mathbb{F}_{q^2}$-linear code.
\end{proof}

For each $i = 1, \dots, n$, let $G_i \in \mathbb{F}_q^{k \times 2}$ denote the $i$-th block column of $G$, formed by the $(2i{-}1)$-th and $2i$-th columns, and let $U_i$ denote the column space of $G_i$.
From the equation $JG = G(AM_\omega A^{-1})$, we have
\[
J G_i = G_i T_i,
\]
where $T_i = A_i \mathbf{M}_\omega A_i^{-1}$. 
If $G_i$ is the zero matrix for some $i$, then the $i$-th coordinate of every codeword is zero, and we may puncture this coordinate. 
A monomial equivalence maps an identically zero coordinate to an identically zero coordinate, and a code with an identically zero coordinate is linear over $\mathbb{F}_{q^2}$ if and only if the code obtained by deleting that coordinate is linear. Hence $C$ is equivalent to a linear code if and only if the punctured code is.

From now on we assume that $G_i$ is nonzero for all $i$.
Define
$$
\mathcal{R} = \{ R \in \mathbb{F}_{q}^{k \times k} : R\, U_i \subseteq U_i \text{ for all } i = 1, \dots, n \}.
$$
It can be checked that $\mathcal{R}$ is an $\mathbb{F}_q$-subspace of $\mathbb{F}_{q}^{k \times k}$. 
 
\begin{theorem}\label{eigenvalue} 
If $C$ is equivalent to a linear code over $\mathbb{F}_{q^2}$, then there exists $J \in \mathcal{R}$ such that $f(J) = J^2 + c_1 J + c_0 I_k = \mathbf{0}$. 
For any such $J$, the algebra 
\[
\mathbb{F}_q[J] = \{aI_k + bJ : a, b \in \mathbb{F}_q\}
\]
is isomorphic to $\mathbb{F}_{q^2}$. 
In particular, $J$ has no eigenvalue in $\mathbb{F}_q$.
Moreover, if $\operatorname{rank}(G_i) = 1$ for some $i$, then $C$ is not equivalent to any linear code over $\mathbb{F}_{q^2}$.
\end{theorem}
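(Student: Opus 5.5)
We start from Theorem \ref{additivecharacterization}. If $C$ is equivalent to an $\mathbb{F}_{q^2}$-linear code, it supplies $J \in \mathrm{GL}_k(\mathbb{F}_q)$ and $A = \operatorname{diag}(A_1,\dots,A_n)$ with $JG = G(AM_\omega A^{-1})$. Reading this equation block column by block column gives $JG_i = G_iT_i$ with $T_i = A_i\mathbf{M}_\omega A_i^{-1}$. Every column of $JG_i$ is therefore a linear combination of the columns of $G_i$, so $JU_i \subseteq U_i$ for each $i$, and $J \in \mathcal{R}$.

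To obtain $f(J)=\mathbf{0}$, we note that $T_i$ is conjugate to the companion matrix $\mathbf{M}_\omega$, so $f(T_i) = A_if(\mathbf{M}_\omega)A_i^{-1} = \mathbf{0}$ by Cayley--Hamilton. Iterating $JG = GT$, where $T = AM_\omega A^{-1}$, gives $J^2G = JGT = GT^2$, and by linearity $f(J)G = Gf(T)$. Since $f(T) = \operatorname{diag}(f(T_1),\dots,f(T_n)) = \mathbf{0}$, we get $f(J)G = \mathbf{0}$. Because $G$ has rank $k$, its rows are independent, so $G$ has a right inverse, and hence $f(J) = \mathbf{0}$.

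Now take any $J$ with $f(J)=\mathbf{0}$; membership in $\mathcal{R}$ is not needed for this step. The map $\mathbb{F}_q[x] \to \mathbb{F}_q[J]$, $x \mapsto J$, kills $f$. Since $f$ is irreducible, the kernel is either $(f)$ or all of $\mathbb{F}_q[x]$, and the second case is impossible because $I_k \neq 0$. Hence $\mathbb{F}_q[J] \cong \mathbb{F}_q[x]/(f) \cong \mathbb{F}_{q^2}$, and in particular $J$ is not scalar, so $\{I_k, J\}$ is a basis. If $Jv = \lambda v$ with $\lambda \in \mathbb{F}_q$ and $v\neq 0$, then $0 = f(J)v = f(\lambda)v$, which forces $f(\lambda)=0$ and contradicts the irreducibility of $f$. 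So $J$ has no eigenvalue in $\mathbb{F}_q$.

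For the final claim, suppose $C$ is equivalent to a linear code and $\operatorname{rank}(G_i)=1$. Take $J$ from the first part, so that $JU_i \subseteq U_i$ with $\dim U_i = 1$. Then $U_i$ is spanned by some $u \neq 0$ with $Ju = \lambda u$, $\lambda\in\mathbb{F}_q$, which contradicts the previous step. The only point needing some care is passing from $f(J)G=\mathbf{0}$ to $f(J)=\mathbf{0}$, and this is handled by the full row rank of $G$.
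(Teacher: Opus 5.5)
Your proof is correct and follows the paper's argument essentially step by step. You get $J\in\mathcal{R}$ and $f(J)=\mathbf{0}$ from Theorem~\ref{additivecharacterization} plus the full row rank of $G$, your evaluation-kernel argument for $\mathbb{F}_q[J]\cong\mathbb{F}_{q^2}$ is the paper's minimal-polynomial argument in other words, and the rank-one case fails because of the eigenvector forced inside the one-dimensional $J$-invariant $U_i$. The only differences are cosmetic: you work with all of $G$ via a right inverse rather than block columns, and you state the final contradiction explicitly.
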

 
\begin{proof} 
Assume that $C$ is equivalent to a linear code over $\mathbb{F}_{q^2}$.
By Theorem~\ref{additivecharacterization}, there exists $J \in \mathrm{GL}_k(\mathbb{F}_q)$ such that $JG_i = G_iT_i$ for all $i$, where $T_i = A_i \mathbf{M}_\omega A_i^{-1}$. 
In particular, $J U_i = U_i$ for all $i$, so $J \in \mathcal{R}$. 
Since each $T_i$ is conjugate to $\mathbf{M}_\omega$, we have $T_i^2 + c_1 T_i + c_0 I_2 =\mathbf{0}$ for all $i$.
Then
\[ 
(J^2 + c_1 J + c_0 I_k)\, G_i = G_i\, (T_i^2 + c_1 T_i + c_0 I_2) =\mathbf{0}. 
\]
Since $G$ has rank $k$, the columns of $G_1, \dots, G_n$ span $\mathbb{F}_q^k$, so $f(J)=J^2 + c_1 J + c_0 I_k = \mathbf{0}$.

Now let $J \in \mathcal{R}$ be such that $f(J) = \mathbf{0}$.
The minimal polynomial of $J$ divides $f$, and since $f$ is irreducible over $\mathbb{F}_q$, it is $f$. 
In particular, $J$ is not a scalar matrix, and $I_k$ and $J$ are linearly independent, so the map $a + b\omega \mapsto aI_k + bJ$ is an isomorphism from $\mathbb{F}_{q^2}$ onto $\mathbb{F}_q[J] = \{aI_k + bJ : a, b \in \mathbb{F}_q\}$.  

If $Jv = \lambda v$ for some nonzero $v$ and $\lambda \in \mathbb{F}_q$, then $\mathbf{0} = f(J)v = f(\lambda)v$, implying $f(\lambda)=0$. This contradicts the irreducibility of $f$, and so $J$ has no eigenvalue in $\mathbb{F}_q$.

For the last statement, if $\operatorname{rank}(G_i) = 1$ for some $i$, then every $R \in \mathcal{R}$ preserves the one-dimensional subspace $U_i$, so every $R \in \mathcal{R}$ has an eigenvalue in $\mathbb{F}_q$.
\end{proof}

We now assume that $\operatorname{rank}(G_i) = 2$ for all $i$. We first show that under this assumption,  the first part of Theorem \ref{eigenvalue} can be strengthened to an if and only if statement.

\begin{theorem} \label{minimalpolyR}
Assume that $\operatorname{rank}(G_i)=2$ for all $i$.
Then $C$ is equivalent to an $\mathbb F_{q^2}$-linear code if and only if
there exists $J\in\mathcal R$ such that
\[
f(J)=J^2+c_1J+c_0I_k=\mathbf 0.
\]
\end{theorem}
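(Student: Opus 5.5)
The forward direction is exactly the first assertion of Theorem~\ref{eigenvalue}, so only the converse needs work. Assume $J\in\mathcal R$ satisfies $f(J)=\mathbf 0$. By Theorem~\ref{additivecharacterization}, it suffices to produce $A=\operatorname{diag}(A_1,\dots,A_n)$ with $A_i\in\mathrm{GL}_2(\mathbb F_q)$ such that $JG_i=G_iA_i\mathbf M_\omega A_i^{-1}$ for every $i$. We also need $J$ invertible; this holds because $f(J)=\mathbf 0$ with $c_0\neq 0$ gives $J(-c_0^{-1}(J+c_1I_k))=I_k$.

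Fix $i$. Since $\operatorname{rank}(G_i)=2$, the two columns of $G_i$ form a basis of $U_i$, and $JU_i\subseteq U_i$. Hence there is a unique $T_i\in\mathbb F_q^{2\times 2}$ with $JG_i=G_iT_i$; it is the matrix of $J|_{U_i}$ in this basis. Then
\[
G_if(T_i)=f(J)G_i=\mathbf 0,
\]
and since $G_i$ has full column rank, $f(T_i)=\mathbf 0$.

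The key step is to show that $T_i$ is conjugate to $\mathbf M_\omega$ in $\mathrm{GL}_2(\mathbb F_q)$. The minimal polynomial of $T_i$ divides the irreducible quadratic $f$, so it equals $f$. Being of degree $2$, it also coincides with the characteristic polynomial of $T_i$. A $2\times2$ matrix whose minimal polynomial equals its characteristic polynomial is cyclic: pick any $v\neq 0$ (a row vector, since matrices act on the right). Then $v$ and $vT_i$ are linearly independent, since otherwise $v$ would be an eigenvector with eigenvalue in $\mathbb F_q$, contradicting irreducibility. In the basis $\{v,vT_i\}$, the matrix of $T_i$ is the companion matrix $\mathbf M_\omega$. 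Concretely, let $B$ have rows $v$ and $vT_i$. Then $BT_i=\mathbf M_\omega B$, so $T_i=B^{-1}\mathbf M_\omega B$, and we set $A_i=B^{-1}$.

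Assembling the blocks gives $JG=G(AM_\omega A^{-1})$ with $J\in\mathrm{GL}_k(\mathbb F_q)$, and Theorem~\ref{additivecharacterization} then finishes the proof.

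The main obstacle is the similarity step, together with keeping track of the row and column conventions. The rank-$2$ hypothesis is used precisely to make $T_i$ well defined and to transfer $f(J)=\mathbf 0$ to $f(T_i)=\mathbf 0$.
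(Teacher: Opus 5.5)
Your proposal is correct and follows the paper's proof essentially step for step. Both arguments define the unique $T_i$ with $JG_i=G_iT_i$, transfer $f(J)=\mathbf 0$ to $f(T_i)=\mathbf 0$ via the full column rank of $G_i$, and build the matrix $B$ with rows $v, vT_i$ satisfying $BT_i=\mathbf M_\omega B$. Both also get invertibility of $J$ from $c_0\neq 0$ before concluding with Theorem~\ref{additivecharacterization}. Your explicit choice $A_i=B^{-1}$ simply makes the paper's conjugacy statement concrete.
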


\begin{proof} 
If $C$ is equivalent to an $\mathbb{F}_{q^2}$-linear code, then by Theorem~\ref{eigenvalue} there exists $J \in \mathcal{R}$ with $f(J) = \mathbf{0}$.
Conversely, assume that there exists $J \in \mathcal{R}$ such that $f(J) = \mathbf{0}$.
For each $i$, since $J U_i \subseteq U_i$ and $\operatorname{rank}(G_i) = 2$,   there exists a unique $T_i \in \mathbb{F}_q^{2 \times 2}$ such that 
$JG_i = G_iT_i.$
Then
\[
\mathbf{0} = f(J)G_i = G_i f(T_i).
\]
Since $G_i$ has rank $2$, we have $f(T_i) = \mathbf{0}$, and so the minimal polynomial of $T_i$ divides $f$. 
As $f$ is irreducible over $\mathbb{F}_q$, the minimal polynomial of $T_i$ is exactly $f$, and so $T_i$ has no eigenvalue in $\mathbb{F}_q$. 
Hence, for any nonzero row vector $v \in \mathbb{F}_q^{1 \times 2}$, the vectors $v$ and $vT_i$ are linearly independent, so the matrix $B_i = \begin{pmatrix} v \\ vT_i \end{pmatrix}$ is invertible. From $T_i^2 = -c_0 I_2 - c_1 T_i$ we obtain
\[
B_i T_i = \begin{pmatrix} vT_i \\ vT_i^2 \end{pmatrix} 
= \begin{pmatrix} vT_i \\ -c_0 v - c_1 vT_i \end{pmatrix} 
= \mathbf{M}_\omega B_i,
\]
and therefore $T_i$ is conjugate to   $\mathbf{M}_\omega$.

Moreover, $J$ is invertible, since $J(J + c_1 I_k) = -c_0 I_k$ and $c_0 \neq 0$. 
Writing $T_i = A_i \mathbf{M}_\omega A_i^{-1}$ with $A_i \in \mathrm{GL}_2(\mathbb{F}_q)$ and $A = \operatorname{diag}(A_1, \dots, A_n)$, we obtain $JG = G(A M_\omega A^{-1})$.
By Theorem~\ref{additivecharacterization}, $C$ is equivalent to an $\mathbb F_{q^2}$-linear code.
\end{proof}

To compute $\mathcal{R}$, we transform the matrix equation $RG_i = G_iT_i$ into a standard homogeneous 
linear system using Kronecker products and vectorization, see for example~\cite[Subsection~2.4]{magnus2019}. 
For any matrix $X$, let $\mathrm{vec}(X)$ denote its vectorization, defined as the single column vector containing the entries of $X$ in column-major order. 
Applying the vectorization identity $\mathrm{vec}(XYZ) = (Z^T \otimes X)\,\mathrm{vec}(Y)$ (see~\cite[Theorem~2.2]{magnus2019}) to both sides of $RG_i = G_iT_i$, we obtain
\[
(G_i^T \otimes I_k)\,\mathrm{vec}(R) - (I_2 \otimes G_i)\,\mathrm{vec}(T_i) = \mathbf{0}.
\]
In other words, we obtain the system $S\,\mathbf{x} = \mathbf{0}$, where
\[
\mathbf{x}= \begin{pmatrix} \mathrm{vec}(R) \\ \mathrm{vec}(T_1) \\ \vdots \\ \mathrm{vec}(T_n) \end{pmatrix},
\]
and $S$ is the $2nk \times (k^2 + 4n)$ matrix
\begin{equation} \label{tildeS}
S = \begin{bmatrix}
G_1^T \otimes I_k & -(I_2 \otimes G_1) & 0 & \cdots & 0 \\
G_2^T \otimes I_k & 0 & -(I_2 \otimes G_2) & \cdots & 0 \\
\vdots & \vdots & \vdots & \ddots & \vdots \\
G_n^T \otimes I_k & 0 & 0 & \cdots & -(I_2 \otimes G_n)
\end{bmatrix}.
\end{equation}
Since $\operatorname{rank}(G_i) = 2$, the matrix $T_i$ is uniquely determined by $R$ from the relation $RG_i = G_iT_i$. Conversely, given any $R \in \mathcal{R}$, defining $T_i$ by this relation gives a vector $\mathbf{x} = (\mathrm{vec}(R), \mathrm{vec}(T_1), \dots, \mathrm{vec}(T_n))^T$ in the null space of $S$. This is a bijection between $\mathcal{R}$ and the null space of $S$. Since both are $\mathbb{F}_q$-vector spaces, we have
$$
\dim_{\mathbb{F}_q}(\mathcal{R}) = \mathrm{nullity}(S).
$$
We observe that $\mathcal{R}$ is a unital $\mathbb{F}_q$-subalgebra of $\mathbb{F}_q^{k \times k}$. We have $I_k \in \mathcal{R}$, and if $R, R' \in \mathcal{R}$, then $RR' U_i \subseteq R\, U_i \subseteq U_i$ for all $i$, so $RR' \in \mathcal{R}$.

\begin{theorem} \label{notSLetest} \label{oddnullity}
Let $C \subseteq \mathbb{F}_{q^2}^n$ be an additive code of $\mathbb{F}_q$-dimension $k$, and let $G \in \mathbb{F}_q^{k \times 2n}$ be a generator matrix of $C$. 
Assume that $\operatorname{rank}(G_i) = 2$ for all $i$, and let $S$ be the matrix defined in~\eqref{tildeS}. 
If $\mathrm{nullity}(S)$ is odd, then $C$ is not equivalent to any linear code over $\mathbb{F}_{q^2}$.

\end{theorem}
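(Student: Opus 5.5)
We argue by contraposition. Suppose $C$ is equivalent to an $\mathbb{F}_{q^2}$-linear code. We will show that $\dim_{\mathbb{F}_q}(\mathcal{R})$ is even, and then conclude via the bijection $\dim_{\mathbb{F}_q}(\mathcal{R}) = \mathrm{nullity}(S)$ established above.

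By Theorem~\ref{eigenvalue} (or Theorem~\ref{minimalpolyR}, since all $G_i$ have rank $2$), there exists $J \in \mathcal{R}$ with $f(J)=\mathbf{0}$. Moreover, $K := \mathbb{F}_q[J] = \{aI_k + bJ : a,b\in\mathbb{F}_q\}$ is a field isomorphic to $\mathbb{F}_{q^2}$. Since $\mathcal{R}$ is a unital subalgebra and $J\in\mathcal{R}$, we have $K \subseteq \mathcal{R}$. Left multiplication therefore makes $\mathcal{R}$ a left $K$-module: for $x\in K$ and $R\in\mathcal{R}$, the product $xR$ lies in $\mathcal{R}$ by closure under products. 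A module over a field is a vector space, so $\mathcal{R}$ is a $K$-vector space of some finite dimension $m$.

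Restricting scalars from $K$ to $\mathbb{F}_q$, where $\mathbb{F}_q$ is embedded as the scalar matrices $aI_k \subseteq K$, gives $\dim_{\mathbb{F}_q}\mathcal{R} = [K:\mathbb{F}_q]\cdot m = 2m$. Hence $\mathrm{nullity}(S)$ is even. This is the contrapositive of the statement: if the nullity is odd, no such $J$ exists, and $C$ is not equivalent to a linear code.

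The only point needing care is compatibility of the two $\mathbb{F}_q$-structures. The $\mathbb{F}_q$-scalar multiplication on $\mathcal{R}$ used in computing the nullity must coincide with multiplication by $aI_k \in K$. It does, since $(aI_k)R = aR$, so the dimension count is legitimate. I do not expect any real obstacle. The one step to state explicitly is that $K$ is a field, which is exactly the content of Theorem~\ref{eigenvalue}; this is what makes $\mathcal{R}$ free over $K$.
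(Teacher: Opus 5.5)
Your proposal is correct and matches the paper's proof essentially step for step. Both use Theorem~\ref{eigenvalue} to get $J \in \mathcal{R}$ with $\mathbb{F}_q[J] \cong \mathbb{F}_{q^2}$, view $\mathcal{R}$ as a vector space over $\mathbb{F}_q[J]$ via left multiplication, and conclude $\mathrm{nullity}(S) = \dim_{\mathbb{F}_q}\mathcal{R} = 2\dim_{\mathbb{F}_{q^2}}\mathcal{R}$ is even. Your remark that the $\mathbb{F}_q$-scalar action agrees with multiplication by $aI_k$ makes explicit a point the paper leaves implicit.
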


\begin{proof}
Suppose for contradiction that $C$ is equivalent to a linear code over $\mathbb{F}_{q^2}$. 
By Theorem~\ref{eigenvalue}, there exists $J \in \mathcal{R}$ with $f(J) = \mathbf{0}$ and $\mathbb{F}_q[J] \cong \mathbb{F}_{q^2}$. 
Since $\mathcal{R}$ is a unital subalgebra containing $J$, it is closed under left multiplication by $\mathbb{F}_q[J]$, and hence $\mathcal{R}$ is a vector space over the field $\mathbb{F}_q[J] \cong \mathbb{F}_{q^2}$. 
Therefore
\[
\mathrm{nullity}(S) = \dim_{\mathbb{F}_q}(\mathcal{R}) = 2 \dim_{\mathbb{F}_{q^2}}(\mathcal{R})
\]
is even, a contradiction.
\end{proof}

 From Theorems~\ref{eigenvalue}, ~\ref{minimalpolyR}, and~\ref{notSLetest}, together with the computation of $\mathcal{R}$ via the matrix $S$, we propose the following Algorithm \ref{equivtest} for determining whether an additive code is equivalent to a linear code. 
 Note that after Step 1, every remaining block $G_i$ has rank $2$. 
 We have this assumption for  Steps 2 and 3, since it guarantees that each $T_i$ is uniquely determined by $R$ and that $\mathrm{nullity}(S)=\dim_{\mathbb{F}_q}(\mathcal{R})$, as required by Theorems~\ref{minimalpolyR} and~\ref{notSLetest}.

\begin{algorithm}[H]
\caption{Equivalence Test for Additive Codes}
\label{equivtest}
\begin{algorithmic}[1]
\Statex \textbf{Input:} A generator matrix $G \in \mathbb{F}_q^{k \times 2n}$ of an additive code $C \subseteq \mathbb{F}_{q^2}^n$,  with respect to the basis $\{1, \omega\}$ of $\mathbb{F}_{q^2}$ over $\mathbb{F}_q$.  
\Statex \textbf{Output:} A determination of whether $C$ is equivalent to a linear code over $\mathbb{F}_{q^2}$.
\Statex
\Statex \textbf{Step 1: Rank check.}
\State Partition $G$ into $n$ block columns $G_1, \dots, G_n$, where each $G_i \in \mathbb{F}_q^{k \times 2}$.
\For{$i = 1, \dots, n$}
    \If{$G_i$ is the zero matrix}
        \State Remove block $G_i$ (puncture the coordinate) and update $n$.
    \EndIf
\EndFor
\For{each remaining block $G_i$}
    \If{$\operatorname{rank}(G_i) = 1$}
        \State \Return ``Not equivalent to any linear code.'' \Comment{Theorem~\ref{eigenvalue}}
    \EndIf
\EndFor
\Statex
\Statex \textbf{Step 2: Nullity test.}
\State Construct the matrix $S$ as defined in~\eqref{tildeS} and compute $d = \mathrm{nullity}(S)$.
\If{$d$ is odd}
    \State \Return ``Not equivalent to any linear code.'' \Comment{Theorem~\ref{notSLetest}}
\EndIf

\Statex \textbf{Step 3: Null space search.}
\State Compute a basis $\{R_1, \dots, R_d\}$ of $\mathcal{R}$ from the null space of $S$.
\State Determine whether there exists $J = \sum_{j=1}^{d} a_j R_j$, with $a_j \in \mathbb{F}_q$, such that $f(J) = \mathbf{0}$, where $f(x) = x^2 + c_1 x + c_0$ is the minimal polynomial of $\omega$ over $\mathbb{F}_q$.
\If{such $J$ exists}
    \State \Return ``Equivalent to a linear code.'' \Comment{Theorem~\ref{minimalpolyR}}
\Else
    \State \Return ``Not equivalent to any linear code.'' \Comment{Theorem~\ref{minimalpolyR}}
\EndIf
\color{black}
\end{algorithmic}
\end{algorithm}

Steps 1 and 2 of Algorithm~\ref{equivtest} require only rank and nullity computations over $\mathbb{F}_q$. 
In Step 3, writing $J = \sum_{j=1}^{d} a_j R_j$ with $d = \mathrm{nullity}(S)$, 
the condition $f(J) = \mathbf{0}$ is a system of quadratic equations in the unknowns 
$a_1, \dots, a_d$, which can be decided by an exhaustive search over at most 
$q^d$ candidates, each requiring a single evaluation of $f(J)$. 
In practice $d$ is small; for all codes in Section~4 that reach Step 3, we have $d = 2$.

\section{Applications of the equivalence test}

In this section, we apply the equivalence test described in Algorithm \ref{equivtest} to computationally determine whether several known quaternary additive codes are  equivalent to linear codes. For our calculations, we let $q = 2$ and let $\omega$ be a primitive element of $\mathbb{F}_4$ satisfying the minimal polynomial $\omega^2 + \omega + 1 = 0$ over $\mathbb{F}_2$. As noted in Remark \ref{remarkeven}, since $q=2$ is   even, it is sufficient to restrict our consideration for the quaternary codes to $\textrm{SL}$-equivalence. 

We begin by considering Example 3 in \cite{guan2023}. While this code was verified using geometric methods in \cite{ball2026}, we provide an independent algebraic verification here.

\begin{example} Let  $n = 63$, and define the ring $R_{2,63} = \mathbb{F}_2[x]/\langle x^{63} - 1 \rangle$.
For $h(x) = h_0 + h_1 x + h_2 x^2 + \cdots + h_{62} x^{62} \in R_{2,63}$, let  
$
[h(x)] = [h_0, h_1, h_2, \dots, h_{62}].
$ 
Define the following polynomials:
\begin{align*}
g(x) &= x^{53} + x^{52} + x^{51} + x^{50} + x^{48} + x^{47} + x^{45} + x^{43} + x^{42} + x^{40} + x^{39} + x^{38} \\
&\quad + x^{31} + x^{28} + x^{25} + x^{24} + x^{21} + x^{20} + x^{19} + x^{17} + x^{14} + x^{13} + x^9 + x^8 \\
&\quad + x^5 + x + 1, \\
f_0(x) &= x^{61} + x^{59} + x^{58} + x^{54} + x^{52} + x^{50} + x^{45} + x^{44} + x^{43} + x^{41} + x^{39} + x^{33} \\
&\quad + x^{32} + x^{31} + x^{26} + x^{24} + x^{23} + x^{22} + x^{20} + x^{18} + x^{13} + x^{12} + x^{11} + x^{10} \\
&\quad + x^9 + x^6 + x^4 + x^2 + x, \\
f_1(x) &= 1.
\end{align*}
Let $C_{QC}$ be the 1-generator quasi-cyclic code over $\mathbb{F}_2$ generated by $([g(x)f_0(x)], [g(x)f_1(x)])$. Let $C$ be the corresponding additive code with parameters $[63, 5,45]^2_2$. 
Using MAGMA \cite{magma}, we compute the generator matrix $G$ of $C$. 
The corresponding matrix $S$ has nullity $1$.
By Theorem \ref{notSLetest}, $C$ is not equivalent to any linear code over $\mathbb{F}_4$.
\end{example}

In addition, we  verified the quaternary additive codes with integral dimensions listed in \cite[Table II]{guan2023}. 
Following the notation in \cite{guan2023}, ``Ex" denotes an extended code, and ``Au" denotes a code augmented by the all-ones vector $\mathbf{1}_n$. The explicit generator polynomials for the $[127, 11, 79]^2_2$ code and its extended code are missing from the original text and  were provided by the authors in a private communication \cite{guanPC}. All the codes that we verified have odd $\operatorname{nullity}(S)$, and so they are not equivalent  to any linear code. 
A summary of this verification is provided in Table \ref{table2_verification}. 
\begin{table}[H]
\centering
\caption{Verification of Quaternary Additive Codes from \cite[Table II]{guan2023}}
\renewcommand{\arraystretch}{1.4}
\begin{tabular}{|c|p{4.5cm}|p{3.5cm}|p{2.5cm}|}
\hline
\textbf{Line} & \textbf{Code Parameters} & \textbf{Description in \cite{guan2023}} & $\text{nullity}(S)$ \\ \hline

1 & $[47,35,7]^2_2$ & Example 5 & 1 \\ \hline

2 & $[56,11,30]^2_2$ (ExAu) & Example 6 & 1 \\ \hline

3 & $[63,5,45]^2_2$  \newline $[64,5,46]^2_2$ (Ex) & Example 3 & 1 \newline 3 \\ \hline

4 & $[91,7,63]^2_2$ (Au) \newline $[92,7,64]^2_2$ (ExAu) & Table V, line 13 & 1 \newline 1 \\ \hline

5 & $[127,11,79]^2_2$ \newline  $[128,11,80]^2_2$  (Ex) & \text{private} \newline \text{communication \cite{guanPC}}
& 1 \newline 1\\ \hline

6 & $[195,7,139]^2_2$ (Au) \newline $[196,7,140]^2_2$ (ExAu) & Table V, line 16 & 1 \newline 1 \\ \hline

\end{tabular}
\label{table2_verification}
\end{table}

We now examine an additive complementary dual (ACD) code from \cite[Table IV]{guan2023}. 
We note that the alternating form used for quaternary additive codes coincides with the trace Hermitian inner product, see for example \cite{ezermanconcise}. 
For an $\mathbb{F}_4$-linear code, orthogonality with respect to the trace-Hermitian form is equivalent to orthogonality with respect to the usual Hermitian form, because scalar multiplication by elements of $\mathbb{F}_4$ allows one to recover the vanishing of the Hermitian product from the vanishing of all its traces (cf.\ \cite[Theorem 3]{calderbank1998}).
Consequently, if a quaternary ACD code is linear over $\mathbb{F}_4$, its additive dual is identical to its Hermitian dual, and so any such code is a Hermitian linear complementary dual (LCD) code.

Moreover, by Remark~\ref{remarkSLhull}, $\mathrm{SL}$-equivalent codes have the same
hull dimension. In particular, any code $\mathrm{SL}$-equivalent to an ACD code is
again an ACD code. Consequently, if a quaternary ACD code is $\mathrm{SL}$-equivalent
to a linear code $C'$, then $C'$ is a linear ACD code, and hence a Hermitian LCD code.

We are mainly interested in   additive codes whose minimum distance exceeds that of the best known Hermitian LCD codes in \cite[Table 6]{araya2024}. 
 
\begin{example} \label{exampleACD}
 We consider the ACD code in Table IV, line 1 from \cite{guan2023}. The generators are defined in Table VI, line 1 from \cite{guan2023} as follows: 
\[
\begin{aligned}
g(x) &= x^{2} + 1, \\
f_{0}(x) &= x^{19} + x^{14} + x^{11} + x^{10} + x^{2} + 1, \\
f_{1}(x) &= x^{21} + x^{19} + x^{18} + x^{17} + x^{16} + x^{14} + x^{10} + x^{9} + x^{5} + 1.
\end{aligned}
\]
Let $C_{QC}$ be the 1-generator quasi-cyclic code over $\mathbb{F}_2$ generated by $([g(x)f_0(x)], [g(x)f_1(x)])$. Let $C$ be the corresponding ACD code with parameters $[22, 10, 9]^2_2$. The code $C$ is not closed under multiplication by $\omega$ so it is not a linear code over $\mathbb{F}_{4}$. The corresponding matrix $S$ has null space of dimension 2, so Theorem \ref{notSLetest} is inconclusive. 
We therefore apply Step 3 of Algorithm~\ref{equivtest}. 
Let the generator matrix $G$ of $C$ be in reduced row echelon form, so that $G = (\,I_{20} \mid X\,)$. Since $\mathrm{nullity}(S) = 2$, the exhaustive search runs over the $q^d = 4$ elements of $\mathcal{R}$ and yields
\[
J = \operatorname{diag}(\mathbf{M}_\omega, \mathbf{M}_{\omega^2}, 
\mathbf{M}_\omega, \mathbf{M}_{\omega^2}, \dots, 
\mathbf{M}_\omega, \mathbf{M}_{\omega^2}) \in \mathbb{F}_2^{20 \times 20},
\]
where $\mathbf{M}_{\omega^2} = \mathbf{M}_\omega^2$, satisfying $f(J) = \mathbf{0}$. By Theorem~\ref{minimalpolyR}, $C$ is equivalent to a linear code over $\mathbb{F}_4$. 
Explicitly, let $A = \mathrm{diag}(A_1, \dots, A_{22})$, where 
$$
A_i = \begin{cases}
\begin{pmatrix} 1 & 0 \\ 0 & 1 \end{pmatrix} & \text{for odd } i, \\[10pt]
\begin{pmatrix} 0 & 1 \\ 1 & 0 \end{pmatrix} & \text{for even } i.
\end{cases}
$$ 
Then $J(GA) = GA\,M_\omega$, and $\phi^{-1}(GA)$ generates an $\mathbb{F}_4$-linear code  $C'$
with generator matrix

\[ 
\left[
\begin{array}{cccccccccccccccccccccc}
1&0&0&0&0&0&0&0&0&0&\omega^2&1&\omega^2&1&1&1&\omega^2&0&\omega&0&1&1\\
0&1&0&0&0&0&0&0&0&0&\omega^2&\omega^2&\omega&\omega^2&0&0&\omega&\omega&\omega&\omega^2&1&0\\
0&0&1&0&0&0&0&0&0&0&0&\omega&\omega&\omega^2&\omega&0&0&\omega^2&\omega^2&\omega^2&\omega&1\\
0&0&0&1&0&0&0&0&0&0&\omega^2&1&0&\omega&\omega^2&\omega&\omega^2&0&0&\omega&\omega^2&\omega\\
0&0&0&0&1&0&0&0&0&0&\omega&1&\omega^2&\omega^2&0&1&1&\omega&1&0&0&1\\
0&0&0&0&0&1&0&0&0&0&\omega^2&\omega&\omega&\omega^2&\omega^2&1&\omega&1&1&1&1&1\\
0&0&0&0&0&0&1&0&0&0&\omega^2&\omega^2&0&\omega&\omega^2&\omega^2&\omega&\omega^2&\omega^2&1&0&0\\
0&0&0&0&0&0&0&1&0&0&0&\omega&\omega&0&\omega^2&\omega&\omega&\omega^2&\omega&\omega&1&0\\
0&0&0&0&0&0&0&0&1&0&0&0&\omega^2&\omega^2&0&\omega&\omega^2&\omega^2&\omega&\omega^2&\omega^2&1\\
0&0&0&0&0&0&0&0&0&1&\omega^2&1&\omega^2&\omega^2&\omega^2&1&0&\omega&0&\omega^2&\omega^2&\omega^2
\end{array}
\right] 
\]

It can be checked that $C'$ is a Hermitian LCD code with parameters $[22,10,9]_4$.
The best previously known minimum weight for a quaternary Hermitian LCD $[22,10]$ code was $8$ \cite[Table 6]{araya2024}. 
The code $C'$ improves this to $9$. 
\end{example}

 In addition, we verified  all quaternary ACD codes listed in \cite[Table IV]{guan2023} to identify those that potentially improve upon the best known Hermitian LCD codes provided in \cite[Table 6]{araya2024}. We note that the majority of the ACD codes presented in \cite{guan2023} have  minimum distances that are either equal to or strictly lower than the established bounds for linear Hermitian LCD codes; consequently, we omit these codes.
 
We focus on  four additive codes where the reported parameters exceed the known linear bounds. These candidates, corresponding to lines 1, 5, 6, and 10 of \cite[Table IV]{guan2023}, are summarized in Table \ref{tableacd_verification}. As shown in Example \ref{exampleACD}, the code in line 1 is $\textrm{SL}$-equivalent to a linear code. The code in line 6 is derived from the base code in line 5 via the ACD shortening technique described in \cite[Lemma 10]{guan2023}.

Our computations confirm that the codes in lines 5, 6, and 10 have $\text{nullity}(S)=1$. By Theorem \ref{notSLetest},  they are not equivalent to any linear code over $\mathbb{F}_4$.

\begin{table}[H]
\centering
\caption{Verification of Quaternary ACD Codes from \cite[Table IV]{guan2023}}
\renewcommand{\arraystretch}{1.4}
\begin{tabular}{|c|c|p{4cm}|c|}
\hline
\textbf{Line} & \textbf{Code Parameters} & \textbf{Description in \cite{guan2023}} & $\mathrm{nullity}(S)$ \\ \hline
1 & $[22, 10, 9]^2_2$ & Table VI, line 1 & 2 \\ \hline
5 & $[27,10,12]^2_2$  & Table VI, line 4 & 1 \\ \hline
6 & $[26,9,12]^2_2$  & Shortened code of $[27,10,12]^2_2$  & 1 \\ \hline
10 & $[28,12,11]^2_2$   & Table VI, line 7 & 1 \\ \hline
\end{tabular}
\label{tableacd_verification}
\end{table}

We also apply our equivalence test to the quaternary ACD codes constructed in \cite{zhu2024}. 
These codes, obtained via the Plotkin sum construction, were shown in \cite{zhu2024} to outperform the best-known quaternary Hermitian LCD codes. 
Our computations confirm that all ten codes listed in \cite[Table 2]{zhu2024} are strictly additive. 
Five of these were identified by the rank condition in Theorem~\ref{eigenvalue}.
For the remaining five codes, we have $\mathrm{nullity}(S) = 2$, and  as in 
Example~\ref{exampleACD}, we apply Step 3 of Algorithm~\ref{equivtest}. 
The  exhaustive search over the $q^d = 4$ elements of $\mathcal{R}$ shows that no 
element satisfies $f(J) = \mathbf{0}$, so by Theorem~\ref{minimalpolyR} these 
codes are strictly additive.
The verification is summarized in Table \ref{tableACDzhu}. 

\begin{table}[H]
\centering
\caption{Verification of Quaternary ACD Codes from \cite[Table 2]{zhu2024}}
\renewcommand{\arraystretch}{1.4}
\begin{tabular}{|c|c|c|c|}
\hline
\textbf{Line} & \textbf{Code Parameters} & $\mathrm{nullity}(S)$ & \textbf{Method} \\ \hline
1 & $[46, 23, 11]^2_2$ & 2 & Step 3 of Algorithm~1 \\ \hline
2 & $[52, 26, 10]^2_2$ & - & $\operatorname{rank}(G_{10})=1$ \\ \hline
3 & $[56, 28, 12]^2_2$ & 2 & Step 3 of Algorithm~1 \\ \hline
4 & $[58, 29, 12]^2_2$ & - & $\operatorname{rank}(G_{30})=1$ \\ \hline
5 & $[62, 31, 12]^2_2$ & - & $\operatorname{rank}(G_{14})=1$ \\ \hline
6 & $[63, 56, 4]^2_2$ & 2 & Step 3 of Algorithm~1 \\ \hline
7 & $[64, 32, 12]^2_2$ & - & $\operatorname{rank}(G_{18})=1$ \\ \hline
8 & $[70, 35, 14]^2_2$ & 2 & Step 3 of Algorithm~1 \\ \hline
9 & $[72, 36, 15]^2_2$ & 2 & Step 3 of Algorithm~1 \\ \hline
10 & $[74, 37, 14]^2_2$ & - & $\operatorname{rank}(G_{20})=1$ \\ \hline
\end{tabular}
\label{tableACDzhu}
\end{table}

\section{Conclusion}

In this paper, we developed an algebraic framework for determining whether an additive code over $\mathbb{F}_{q^2}$ is equivalent to a linear code. Our main contributions are the following.

\begin{enumerate}

\item We characterized when an additive code is equivalent to a linear code over $\mathbb{F}_{q^2}$ (Theorem~\ref{additivecharacterization}), and reformulated this characterization in terms of the algebra $\mathcal{R}$: under the rank-2 assumption, $C$ is equivalent to a linear code if and only if $\mathcal{R}$ contains an element $J$ with $f(J) = \mathbf{0}$ (Theorem~\ref{minimalpolyR}).

\item We derived two  sufficient criteria for strict additivity: the rank condition of Theorem~\ref{eigenvalue}, and the odd-nullity condition on the matrix $S$ (Theorem~\ref{notSLetest}). Together with the search in Theorem~\ref{minimalpolyR}, these yield a deterministic test for equivalence to a linear code. The test is presented as Algorithm~\ref{equivtest}.

\item We applied our test to verify that all quaternary additive codes from \cite[Table~II]{guan2023} are strictly additive (Table~\ref{table2_verification}), and that all ten quaternary ACD codes from \cite[Table~2]{zhu2024} are strictly additive (Table~\ref{tableACDzhu}).

\item We demonstrated that the $[22, 10, 9]^2_2$ ACD code from \cite[Table~IV]{guan2023} is equivalent to a $[22, 10, 9]_4$ Hermitian LCD code, improving the corresponding entry in \cite[Table~6]{araya2024}.
\end{enumerate}

\section*{Acknowledgements}
We would like to thank C.\ Guan, J.\ Lv, G.\ Luo, and Z.\ Ma for providing the explicit generator polynomials for the $[127, 11, 79]^2_2$ code and its extended code. 
We also wish to thank the reviewers for their careful reading and constructive comments, which have greatly improved the presentation of the paper.
K. Abdukhalikov was supported by UAEU-AUA grant G00005533.

\end{document}